\newtheorem{prop}{Proposition}
\newtheorem{theo}{Theorem}
\newcommand{\ba}{\begin{eqnarray}}
\newcommand{\ea}{\end{eqnarray}}
\newcommand{\be}{\begin{equation}}
\newcommand{\ee}{\end{equation}}
\newcommand{\gv}{E}
\newcommand{\fv}{L}
\newcommand{\fmoinsunvierbein}{\ell}
\newcommand{\fvm}{\ell}
\title{A note on ``symmetric" vielbeins in bimetric, massive,  perturbative and non perturbative gravities}
\author[a]{C.~Deffayet}
\author[a]{J.~Mourad}
\author[a]{G.~Zahariade}
\affiliation[a]{APC\;\footnote{UMR 7164 (CNRS, Universit\'e Paris 7, CEA, Observatoire de Paris)}, 10 rue Alice Domon et L\'eonie Duquet,
 75205 Paris Cedex 13, France}
\emailAdd{deffayet@iap.fr}
\emailAdd{mourad@apc.univ-paris7.fr}
\emailAdd{zahariad@apc.univ-paris7.fr}
\abstract{
We consider a manifold  endowed with two different vielbeins $\gv^A_{\hphantom{A}\mu}$ and $\fv^A_{\hphantom{A}\mu}$ corresponding to two different  metrics $g_{\mu \nu}$ and $f_{\mu \nu}$. Such a situation arises generically in bimetric or massive gravity 
(including the recently discussed version of de Rham, Gabadadze and Tolley), as well as in perturbative quantum gravity where one vielbein parametrizes the background space-time and the other the dynamical degrees of freedom. 
We determine the conditions under which the relation $g^{\mu \nu} \gv^A_{\hphantom{A}\mu} \fv^B_{\hphantom{A}\nu} = g^{\mu \nu} \gv^B_{\hphantom{A}\mu} \fv^A_{\hphantom{A}\nu}$ can be imposed (or the ``Deser-van Nieuwenhuizen" gauge chosen). We clarify and correct various statements which have been made about this issue. We show in particular that in $D=4$ dimensions, this condition is always equivalent to the existence of a real matrix square root of $g^{-1}f$.
}
\begin{document}
\maketitle
\flushbottom

\section{Introduction}
There are various situations in physics where one has to consider a manifold endowed with two different vielbein fields. Obviously, this appears to be the case in bimetric theories, theories where two different metrics are defined on the same space-time manifold \cite{Bimet}. Each of these metrics can then be described by a different vielbein. This is also true even if one of the two metrics is not dynamical. It also applies to non linear massive gravity (for recent reviews see \cite{Rubakov:2008nh,Hinterbichler:2011tt}), which is nothing else than a special class of bigravity, and in particular it applies to the recently introduced massive gravity theories of de Rham-Gabadadze-Tolley (dRGT in the following) \cite{deRham:2010kj,deRham:2010ik,deRham:2011rn} as well as to the extension of these to the dynamical bimetric case \cite{Hassan:2011tf,Hassan:2011zd}. A similar situation also occurs when one expands General Relativity around a fixed background metric and expresses both the background and the dynamical metric in terms of vielbeins. This is the starting point of many works dealing with quantum gravity (see e.g. \cite{Deser:1974cy,Woodard:1984sj}).

Considering such situations, let us define, in arbitrary $D$ dimensions, $\gv^{A}$ and $\fv^{A}$ to be two  bases of 1-forms obeying at every space-time point \footnote{Our convention is that Greek letters denote space-time indices, while capital Latin letters denote Lorentz indices that are moved up and down with the canonical Minkowski metric $\eta_{AB}$}
\be
g^{\mu\nu}\gv^{A}_{\hphantom{A}\mu}\gv^{B}_{\hphantom{B}\nu}=f^{\mu\nu}\fv^{A}_{\hphantom{A}\mu} \fv^{B}_{\hphantom{B}\nu}=\eta^{AB}\ ,
\ee
or equivalently
\ba
&\eta_{AB}\gv^{A}{}_{\mu}\gv^{B}{}_{\nu}=g_{\mu\nu}\ , \label{def2}\\
&\eta_{AB}\fv^{A}{}_{\mu}\fv^{B}{}_{\nu}=f_{\mu\nu}\ , \label{def3}
\ea
where $g_{\mu \nu}$ and $f_{\mu \nu}$ are respectively the metrics associated with the vielbeins. 
 We will also need the vectors $e_{A}$ and $\fmoinsunvierbein_{A}$, respectively dual to the 1-forms $\gv^{A}$ and $\fv^A$, that verify
\ba \gv^{A}(e_{B})=\gv^{A}{}_{\mu}e_{B}{}^{\mu}=\delta^{A}{}_{B}\ , \\
\fv^{A}(\fmoinsunvierbein_{B})=\fv^{A}{}_{\mu}\fmoinsunvierbein_{B}{}^{\mu}=\delta^{A}{}_{B}\ .
\ea
For future use, let us rewrite the above relations (and consequences thereof) using matrix notations. We have
\ba
f&=& L^t \eta L \label{fL}\ ,\\
f^{-1} &=& l^t \eta l \label{fmL}\ ,\\
\mathbb{1}_D&=&l^t L = L l^t = L^t l = l L^t \label{lL}\ , 
\ea
where $\mathbb{1}_D$ is the $D \times D$ identity matrix, $m^t$ denotes the matrix transpose of the matrix $m$, $\eta$ is just $diag(-1,1,\cdots,1)$ and the same relations hold between $\gv$, $e$ and $g$ respectively.

The defining relations (\ref{def2}) and (\ref{def3}) imply the gauge symmetry 
\ba \label{rot1}
\gv^{A}{}_\mu &\rightarrow& \Lambda^A_{\hphantom{A}C} \gv^{C}{}_\mu\ , \label{rot1} \\
\fv^{B}{}_\mu &\rightarrow& \tilde{\Lambda}^B_{\hphantom{B}D} \fv^{D}{}_\mu\ , \label{rot2}
\ea
with $\Lambda^A_{\hphantom{A}C}$ and $\tilde{\Lambda}^B_{\hphantom{B}D}$ Lorentz matrices. 

It is often convenient to ask for a ``symmetry" condition on the vielbeins which reads 
\ba \label{CONS} 
e^{\hphantom{A}\mu}_A \fv_{B \mu} = e^{\hphantom{B}\mu}_B \fv_{A \mu}\ .
\ea

Notice that this condition can also be written as $g^{\mu\nu}E^{A}{}_{\mu}L^{B}{}_{\nu}=g^{\mu\nu}E^{B}{}_{\mu}L^{A}{}_{\nu}$ and that Ref. \cite{Hoek:1982za} uses an equivalent form which reads $E^{A}{}_{\mu}L_{A\nu}=E^{A}{}_{\nu}L_{A\mu}$.   

In the recent discussions about massive gravity, such a condition has been used to ensure the existence of, and express, the matrix square root of $g^{-1}f$ which enters in a crucial way in the definition of dRGT theory (see e.g. \cite{Chamseddine:2011mu,Volkov:2012wp}).
Indeed, whenever condition (\ref{CONS}) holds, $\gamma$ defined as  
\ba \label{gammavier}
\gamma^\mu_{\hphantom{\mu} \nu} = e^{\hphantom{A}\mu}_A \fv^A_{\hphantom{A}\nu}
\ea
verifies the defining equation of the matrix square root of $g^{-1}f$ given by 
\ba \label{gammadef} 
\gamma^\mu_{\hphantom{\mu} \sigma}\gamma^\sigma_{\hphantom{\sigma} \nu} = g^{\mu \sigma} f_{\sigma \nu}\ .
\ea
It has also been argued by Hinterbichler and Rosen \cite{Hinterbichler:2012cn} that, in the vielbein reformulation of dRGT theories, condition (\ref{CONS}) is obtained as a consequence of field equations. To prove this, they use a decomposition of an arbitrary matrix $M$ (representing some unconstrained arbitrary vielbein  multiplied by $\eta$)
  as 
 \ba \label{Mlambdas} \label{POLARLOREN}
 M = \lambda s\ ,
 \ea
where $\lambda$ is  a Lorentz matrix and $s$ is a symmetric matrix. This is reminiscent of the so-called polar decomposition stating that an arbitrary invertible matrix can be written as the product of an orthogonal matrix with a symmetric matrix. However we will show that such a decomposition does not hold in general if one replaces the orthogonal matrix by  a Lorentz transformation. This makes in particular the argument of Ref. \cite{Hinterbichler:2012cn} incomplete.

Furthermore, in massive gravity as well as in perturbative quantum gravity condition (\ref{CONS}) has been used as a gauge condition.
 In the quantum gravity context, this gauge (sometimes dubbed Deser-van Nieuwenhuizen gauge in reference to \cite{Deser:1974cy}) has  been first introduced via a gauge fixing term in the action and dealt with perturbatively \cite{Deser:1974cy,Woodard:1984sj}. It was then later argued that this gauge can be set ``non perturbatively", i.e. that given a set of arbitrary vielbeins 
 $\gv^{A}$ and $\fv^{A}$ that do not fullfill condition (\ref{CONS}), one can always Lorentz rotate them as in (\ref{rot1}), and (\ref{rot2}) to define a new set of vielbeins obeying this condition \cite{Hoek:1982za} (with the consequence that the corresponding gauge would not suffer from Gribov-like ambiguities). Interestingly enough, the same statements  have also been made in the context of massive gravity. Indeed, there as well the condition (\ref{CONS}) has been used ``perturbatively" (i.e. in the case when both metrics $g_{\mu \nu}$ and $f_{\mu \nu}$ are close to one another, see e.g. \cite{Chamseddine:2011mu}), but it has also been argued that condition 
 (\ref{CONS}) can be reached as a (Lorentz) gauge choice for arbitrary metrics \cite{Volkov:2012wp}.
  This contradicts various other statements made in the literature, for example in Ref. \cite{Woodard:1984sj}, where it is stated that gauge (\ref{CONS}) cannot be set beyond perturbation theory. Settling this contradiction, as we intend to do here, will also illuminate issues discussed in the previous paragraph, since, as we will show, to set (\ref{CONS}) via suitable Lorentz rotations of the vielbeins involves a decomposition similar to (\ref{Mlambdas}). 

To be precise, the purpose of this note is to determine when and how the condition (\ref{CONS}) can be enforced, as well as when  the  decomposition (\ref{Mlambdas}) holds. These questions, beyond their mathematical interest, are especially important for massive gravity. Indeed, one can argue that the vielbein formulation of dRGT theories has several advantages over their metric formulations. First of all, it allows a simple extraction of what plays the role of the Hamiltonian constraint \cite{Hinterbichler:2012cn}. Second, in some cases it also allows to dynamically derive the existence of the square root of $g^{-1}f$ that has to be assumed or enforced by Lagrange multipliers in the metric formulation \cite{Hinterbichler:2012cn,us}. Finally, the frame formulation permits a simple discussion of the constraints and the counting of dynamical degrees of freedom in the Lagrangian framework \cite{us}. In this formulation, relation (\ref{CONS}) plays a key role, and it is important to know whether it can be obtained by Lorentz gauge transformations, or it needs additional constraints to be imposed.

This paper is organized as follows. In the next section, we will discuss necessary and sufficient conditions for  (\ref{CONS}) and (\ref{Mlambdas}) to hold.  Then, in section 3, using results on matrix square roots, we will spell out sufficient conditions to achieve  (\ref{CONS}) and (\ref{Mlambdas}). In the next sections we will discuss the specific cases of $D=2$, $D=3$, and $D=4$ space-time dimensions, and in particular some examples clarifying the results of section 3 as well as some left over cases. Finally we will quickly look at the stability of these conditions with respect to the dynamics of the system, i.e. we will discuss whether they are preserved under time evolution in some particular theories, and we will point out some consequences for massive gravity.

Before proceeding, let us mention a special choice for one of the metrics (say $f_{\mu \nu}$) and the associated vielbein $\fv^{A}$. This choice is made in some contexts (e.g. dRGT theories, but also perturbative quantum gravity). It amounts to first assuming that the  metric $f_{\mu \nu}$ is flat and takes the canonical form $\eta_{\mu \nu}$, i.e. 
\ba \label{GAUG1}
f_{\mu \nu} = \eta_{\mu \nu}\ ,
\ea
and then choosing $\fv^A = dx^A$, i.e. such that (in components)
\ba \label{GAUG2}
\fv^{A}_{\hphantom{A}\mu} = \delta^{A}_{\hphantom{A}\mu}\ .
\ea 
When the choice (\ref{GAUG1})-(\ref{GAUG2}) is made,
the constraint (\ref{CONS}) simply reads (labelling here  space-time indices and Lorentz indices with the same set of letters)
\ba \label{symeab}
e^{AB} = e^{BA}\ ,
\ea
stating that the vielbein $e^{A \mu}$ can be represented as a symmetric matrix. This choice will not be used to derive the results of this paper, but will just sometimes be considered as an example.

\section{Necessary and sufficient conditions}

Let us first try to set the constraint (\ref{CONS})  by using the freedom to Lorentz rotate independently the two sets of vielbeins $\fv^A$ and $e_{A}$. Considering two arbitrarily chosen vielbeins $e_{A}$ and $\fv^B$, assume that there  exist two Lorentz transformations $\Lambda^A_{\hphantom{A}B}$ and $\tilde{\Lambda}^A_{\hphantom{A}B}$ such that the matrix $S^{AB}$ defined by 
\ba \label {defS}
S^{AB} &=& \Lambda^A_{\hphantom{A}C} e^{C \mu} \fv^{D}{}_\mu \tilde{\Lambda}^B_{\hphantom{B}D} 
\ea
is symmetric. Defining $M$  as the matrix of components $M^{AB}$ given by \footnote{With our notations, Ref. \cite{Hoek:1982za} uses rather $\ell^{A\mu}E^{B}{}_{\mu}$ as a starting point.} 
\ba \label{defM}
M^{AB} = e^{A \mu} \fv^{B}{}_\mu
\ea (note that this definition implies that $M$ is invertible), the above equality (\ref{defS}) reads in matricial notations 
\ba
S &=& \Lambda M \tilde{\Lambda}^t\ .
\ea
 Multiplying  it on the right by $ \left(\tilde{\Lambda}^t\right)^{-1}$ and on the left by $\tilde{\Lambda}^{-1}$ we get 
 \ba  \tilde{\Lambda}^{-1} S \left(\tilde{\Lambda}^{t}\right)^{-1} &=& \tilde{\Lambda}^{-1} \Lambda M\ .
\ea
For $S$ to be symmetric, the matrix on the left hand side above should be symmetric, call it $s$. Defining the Lorentz transformation $\lambda $ by $\lambda = \Lambda^{-1} \tilde{\Lambda}$ we get that the invertible matrix $M$ should be written as in Eq.(\ref{Mlambdas}). Being a Lorentz transformation, $\lambda$  verifies 
\ba \label{rellam}
\lambda^t \eta \lambda = \eta = \lambda \eta \lambda^t\ .
\ea
 As we already stated, a decomposition
 such as in Eq.(\ref{Mlambdas}) does not hold in general (in constrast to the polar decomposition). Indeed, rewriting (\ref{POLARLOREN}) as $\lambda =  M s^{-1}$ and inserting this into (\ref{rellam}) we get after some trivial manipulation, that $M$ and $s$ should fullfill the necessary condition
\ba
\left(\eta s\right)\left( \eta s \right) = \eta  M^t  \eta M\ .
\ea
Running backward the above argument it is easy to see that the above condition is also sufficient (just because the matrix defined as 
$M s^{-1}$ will be a Lorentz  transformation). Hence we have proven the following proposition.

\begin{prop} \label{prop1}
 An arbitrary invertible matrix $M$ can be decomposed as $M = \lambda s$, $\lambda$ being the matrix of a Lorentz transformation and $s$ a symmetric matrix, if and only if (i) the real matrix $\eta M^{t} \eta M $ has a real square root, and (ii) at least one such square root can be written as the product of $\eta$ with a symmetric matrix.  

\end{prop}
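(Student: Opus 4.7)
The plan is to prove both directions by direct computation, exploiting the defining identity $\lambda^{t}\eta\lambda=\eta$ to turn the product $\eta M^{t}\eta M$ into a perfect square of the form $(\eta s)^{2}$.

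For the forward implication, I assume $M=\lambda s$ with $\lambda$ Lorentz and $s$ symmetric. Then
\begin{equation*}
\eta M^{t}\eta M \;=\; \eta s^{t}\lambda^{t}\eta\lambda s \;=\; \eta s\,(\lambda^{t}\eta\lambda)\,s \;=\; \eta s\,\eta\, s \;=\; (\eta s)^{2},
\end{equation*}
using $s^{t}=s$ and the Lorentz identity \eqref{rellam}. So the real matrix $\eta s$ is a square root of $\eta M^{t}\eta M$, and by construction it equals $\eta$ times the symmetric matrix $s$; this delivers both (i) and (ii).

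For the converse, I start from conditions (i) and (ii): let $\eta s$ be a real square root of $\eta M^{t}\eta M$ with $s$ symmetric. Squaring and cancelling one $\eta$ on the left gives the identity $s\,\eta\, s = M^{t}\eta M$. Since $M$ is invertible, $\det(\eta M^{t}\eta M)\ne 0$, so $\eta s$ and hence $s$ is invertible. Define the candidate Lorentz transformation
\begin{equation*}
\lambda \;:=\; M s^{-1}.
\end{equation*}
Then $\lambda s = M$ automatically, and it remains to check the Lorentz condition:
\begin{equation*}
\lambda^{t}\eta\lambda \;=\; (s^{-1})^{t} M^{t}\eta M\, s^{-1} \;=\; s^{-1}(s\,\eta\, s)s^{-1} \;=\; \eta,
\end{equation*}
where I used symmetry of $s$ and the identity derived a moment ago. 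This finishes the ``if'' direction.

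I do not anticipate any real obstacle; the proof is essentially the bookkeeping already laid out in the paragraph preceding the proposition, packaged as ``if and only if''. The one point worth flagging carefully is why both conditions (i) and (ii) are required rather than just (i): a generic real square root of $\eta M^{t}\eta M$ need not be of the form $\eta\cdot(\text{symmetric})$, and it is precisely this extra structural requirement that encodes the symmetry of $s$ in the decomposition, and which will make the difference (in later sections) between the standard polar decomposition and its Lorentzian analogue.
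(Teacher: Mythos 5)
Your proof is correct and follows essentially the same route as the paper's own argument: both rewrite the candidate Lorentz factor as $\lambda = Ms^{-1}$ (or equivalently compute $\eta M^t\eta M$ directly from $M=\lambda s$), use $\lambda^t\eta\lambda=\eta$ together with $s^t=s$, and arrive at the identity $(\eta s)^2=\eta M^t\eta M$ as the necessary and sufficient condition. You merely spell out the ``trivial manipulation'' and the invertibility of $s$ more explicitly than the paper does, which is a harmless amount of extra bookkeeping.
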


In particular, when $M$ is given by (\ref{defM}), we have (using relations (\ref{fL})-(\ref{lL}) as well as  definition (\ref{defM}))
\ba \label{relM}
\eta  M^t  \eta M = \eta L g^{-1} f l^t \eta = \fmoinsunvierbein f g^{-1}\fmoinsunvierbein^{-1}= \left(\fv g^{-1} f \fv^{-1} \right)^t.
\ea
 So if $g^{-1} f$ has a square root $\gamma$, then (i) above holds: a square root of $\eta  M^t \eta M$ being then given by $\left(\fv \gamma \fv^{-1}\right)^t$. We then prove the following proposition, 
\begin{prop}
Given two metrics $g_{\mu \nu}$ and $f_{\mu \nu}$, $g^{-1} f$ has a square root $\gamma$ such that $\gamma = f^{-1} s$, with s a symmetric matrix, if and only if the matrix $M$ defined by (\ref{defM}) (and which verifies relations (\ref{relM}))  is such that the real matrix $\eta M^{t} \eta M $ has a real square root which can be written as the product of $\eta$ by a symmetric matrix.
\end{prop}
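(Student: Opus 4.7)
The plan is to exploit the identity (\ref{relM}), $\eta M^t \eta M = (L g^{-1} f L^{-1})^t$, in order to set up a bijection between the real square roots of $\eta M^t \eta M$ and those of $g^{-1} f$, and then to check that under this bijection the two distinguished sub-families --- those of the form $\eta \cdot \text{(symmetric)}$ on one side and $f^{-1} \cdot \text{(symmetric)}$ on the other --- correspond to each other.

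First I would note that if $X$ is any real square root of $\eta M^t \eta M$, then $X^t$ is a real square root of $L g^{-1} f L^{-1}$, so $\gamma := L^{-1} X^t L$ is a real square root of $g^{-1} f$. This map $X \mapsto \gamma$ is invertible, with inverse $\gamma \mapsto (L \gamma L^{-1})^t$, so it provides a one-to-one correspondence between the real square roots of the two matrices.

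The main step is then to translate the structural condition across this bijection. If $X = \eta \sigma$ with $\sigma$ symmetric, then $\gamma = L^{-1} \sigma \eta L$; using $f = L^t \eta L$ from (\ref{fL}) to rewrite $L^{-1} = f^{-1} L^t \eta$, one obtains $\gamma = f^{-1} s$ with $s := L^t \eta \sigma \eta L$, which is symmetric since $\sigma$ is. Conversely, if $\gamma = f^{-1} s$ with $s$ symmetric, then $X^t = L f^{-1} s L^{-1}$; combining $f^{-1} = \ell^t \eta \ell$ from (\ref{fmL}) with $L \ell^t = \mathbb{1}_D$ from (\ref{lL}) gives $L f^{-1} = \eta \ell$ and $L^{-1} = \ell^t$, whence $X^t = \eta \ell s \ell^t$. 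Transposing, $X = \ell s \ell^t \eta = \eta \sigma$ with $\sigma := \eta \ell s \ell^t \eta$, which is again symmetric. This establishes both directions of the claimed equivalence.

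I do not expect any serious obstacle: once the bijection between square roots is in place, the remainder is bookkeeping, chaining the identities (\ref{fL})--(\ref{lL}) and tracking transposes. The only subtlety is the order in which $L$, $\ell$ and $\eta$ appear, since non-commutativity together with transposition can make the manipulation look more delicate than it actually is.
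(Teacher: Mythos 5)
Your proof is correct. The algebraic content is essentially the same as the paper's --- both rely on the conjugation relation (\ref{relM}) and the identities (\ref{fL})--(\ref{lL}), and the symmetric matrices you produce ($\sigma = \eta \ell s \ell^t \eta$ in one direction, $s = L^t\eta\sigma\eta L$ in the other) match the paper's up to relabelling. The difference is organizational: the paper substitutes the hypothesis $g^{-1}f=(f^{-1}s)^2$ directly into (\ref{relM}) and massages the resulting product of five or six factors until it can be read off as $(\eta\sigma)^2$, repeating the exercise for the converse; you instead first isolate the map $X\mapsto\gamma=L^{-1}X^tL$ as an explicit bijection between the sets of real square roots of $\eta M^t\eta M$ and $g^{-1}f$, and only then check that it carries $X=\eta\sigma$ to $\gamma=f^{-1}s$ and back. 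This makes each direction a short one-line translation instead of a chain of insertions of $L\ell^t=\mathbb{1}_D$, and it also makes transparent the fact (used implicitly in the paper) that reality of the square root is preserved, since $L$ and $\ell$ are real. Your version is a modest but genuine streamlining of the same argument rather than a different method.
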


\begin{proof}
We first assume that $g^{-1}f$ can be written as $g^{-1} f = \left(f^{-1} s\right)^2$ with $s$ a symmetric matrix. 
Then using this hypothesis into the first equality of (\ref{relM}) we get 
\ba
\eta (M^t) \eta M &=& \eta \fv f^{-1} s f^{-1} s \fvm^t \eta \nonumber \\
&=& \fvm s f^{-1} s \fvm^{t} \eta \nonumber \nonumber \nonumber \\
&=& \fvm s \fvm^t \fv  f^{-1} s \fvm ^t \eta \nonumber \nonumber \\
&=& \fvm s \fvm^t \eta \fvm s \fvm ^t \eta \nonumber \nonumber \\
&=& \eta \left( \eta \fvm s \fvm^t \eta \right) \eta \left( \eta \fvm s \fvm^t \eta \right). \nonumber
\ea 
The matrix $\eta \fvm s \fvm^t \eta$ being symmetric, this proves one side  of the equivalence. 
Conversely, we assume that there exists a symmetric matrix $s'$ such that 
$\eta \left( M^t \right) \eta M =\left( \eta s'\right)^2$. Then $g^{-1} f$ is given by 
\ba
g^{-1} f &=& \ell^{t}\eta(\eta M^t\eta M) \eta L \nonumber \\
&=& \fvm^t  s' \eta s' \eta L \nonumber \\
&=& \fvm^t  s' \eta L \fvm^t s' \fvm f \nonumber \\
&=& \left (\fvm^t  s' \fvm f\right) \left( \fvm^t   s' \fvm f \right) \nonumber \\
&=& \left (f^{-1} f \fvm^t  s' \fvm f\right) \left( f^{-1} f \fvm^t   s' \fvm f \right). \nonumber
\ea
Noticing that the matrix $f \fvm^t  s' \fvm f$ is symmetric ends the proof.
\end{proof}
Hence, gathering the above results, we have proven the following statement.
\begin{prop} \label{prop2}
There exist vielbeins $e_{A}{}^{\mu}$ and $\fv^{B}{}_{\nu}$ corresponding to the metrics $g_{\mu\nu}$ and $f_{\mu\nu}$ respectively (i.e. $\eta^{AB}e_{A}{}^{\mu}e_{B}{}^{\nu}=g^{\mu\nu}$ and $\eta_{AB}\fv^{A}{}_{\mu}\fv^{B}{}_{\nu}=f_{\mu\nu}$) such that $e_{A}{}^{\mu}\fv_{B\mu}=e_{B}{}^{\mu}\fv_{A\mu}$, if and only if there exists  a real matrix $\gamma$ 
such that (i)  $\gamma^{\mu}{}_{\rho}\gamma^{\rho}{}_{\nu}=g^{\mu\rho}f_{\rho\nu}$ (i.e. $\gamma^{2}=g^{-1}f$), and (ii) $f\gamma$ symmetric. 
\end{prop}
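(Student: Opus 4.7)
The plan is to chain Proposition~\ref{prop1} with the intermediate proposition just established. First I would unpack the symmetry condition~(\ref{CONS}): lowering both Lorentz indices with $\eta$ and using $g^{\mu\nu}E^{A}{}_{\mu}=e^{A\nu}$, the relation $e_{A}{}^{\mu}L_{B\mu}=e_{B}{}^{\mu}L_{A\mu}$ is equivalent to $M=M^{t}$, where $M$ is the matrix defined in (\ref{defM}). Any two vielbeins compatible with $(g,f)$ differ by a pair of independent Lorentz rotations $(\Lambda,\tilde\Lambda)$, under which $M\mapsto \Lambda M\tilde\Lambda^{t}$ by virtue of (\ref{rot1})-(\ref{rot2}). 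Hence the existence of some vielbeins satisfying (\ref{CONS}) is equivalent to the existence of Lorentz matrices $\Lambda,\tilde\Lambda$ making $\Lambda M\tilde\Lambda^{t}$ symmetric for any (equivalently, for one) starting choice of $M$.

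Running backward the manipulations of the paragraph preceding Proposition~\ref{prop1}, that last property is in turn equivalent to $M$ admitting a decomposition of the form (\ref{Mlambdas}) with $\lambda$ a Lorentz matrix and $s$ symmetric. Proposition~\ref{prop1} then reduces the question to the existence of a real square root of $\eta M^{t}\eta M$ expressible as $\eta$ times a symmetric matrix. In view of the identity (\ref{relM}), the intermediate proposition translates this statement into the existence of a real matrix $\gamma$ satisfying $\gamma^{2}=g^{-1}f$ and $\gamma=f^{-1}s'$ for some symmetric $s'$. Setting $s'=f\gamma$, this last requirement is exactly condition~(ii) of the statement, namely that $f\gamma$ be symmetric, which closes the chain of equivalences and establishes Proposition~\ref{prop2}.

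I do not expect any genuine obstacle beyond bookkeeping, since the proof is a transitive chaining of equivalences already prepared by the two preceding propositions. The two delicate points to keep under control are (a) verifying that the independent Lorentz rotations of $e$ and $L$ really span the full two-sided action $M\mapsto \Lambda M\tilde\Lambda^{t}$, so that no admissible relabeling is missed and the existence question does not depend on the initial choice of starting vielbeins, and (b) checking that the reformulation ``$\gamma=f^{-1}s'$ with $s'$ symmetric'' versus ``$f\gamma$ symmetric'' is purely tautological and requires only the invertibility of $f$, with no hidden positivity assumption on $s'$.
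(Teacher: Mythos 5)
Your chaining argument is correct, and in fact it is exactly what the paper alludes to with the sentence ``Hence, gathering the above results, we have proven the following statement'' immediately before Proposition~\ref{prop2}. However, the proof the paper actually writes out (explicitly labeled ``Direct proof'') is a different, self-contained computation: it defines $\gamma^{\mu}{}_{\nu}\equiv e_{A}{}^{\mu}L^{A}{}_{\nu}$ directly and verifies $\gamma^2=g^{-1}f$ and the symmetry of $f\gamma$ by index manipulation, and conversely defines $e_{A}{}^{\mu}\equiv\gamma^{\mu}{}_{\nu}\ell_{A}{}^{\nu}$ from a given $\gamma$ and checks that this is a bona fide vielbein for $g$ satisfying (\ref{CONS}). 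Your route is more economical in that it reuses Proposition~\ref{prop1} and the intermediate proposition rather than re-deriving things, but it inherits their hypotheses (e.g.\ invertibility of $M$, which is automatic here). The direct proof has the virtue of being independent and of exhibiting the explicit bijection $e_{A}{}^{\mu}=\gamma^{\mu}{}_{\nu}\ell_{A}{}^{\nu}$, $\gamma^{\mu}{}_{\nu}=e_{A}{}^{\mu}L^{A}{}_{\nu}$ between admissible $\gamma$'s and equivalence classes of vielbein pairs, which is useful later when one wants to actually construct the symmetric gauge. Your two flagged ``delicate points'' are both non-issues: (a) indeed any two vielbeins for a given metric differ by a point-wise Lorentz rotation and those for $e$ and $L$ can be chosen independently, so the full two-sided action $M\mapsto\Lambda M\tilde\Lambda^{t}$ is realized; and (b) $\gamma=f^{-1}s'$ with $s'$ symmetric versus $f\gamma$ symmetric is a tautology given that $f$ is invertible, with no positivity assumption entering anywhere.
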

\begin{proof}[Direct proof]
Suppose first we have vielbeins $e_{A}$ and $L^{B}$ satisfying the above symmetry property. Then 
\be
\begin{split}
g^{\mu\rho}f_{\rho\nu}&=\eta^{AB}e_{A}{}^{\mu}e_{B}{}^{\rho}\eta_{CD}L^{C}{}_{\rho}L^{D}{}_{\nu}\\
&=\eta^{AB}e_{A}{}^{\mu}L^{D}{}_{\nu}e_{B}{}^{\rho}L_{D\rho}\\
&=\eta^{AB}e_{A}{}^{\mu}L^{D}{}_{\nu}e_{D}{}^{\rho}L_{B\rho}
=e_{A}{}^{\mu}L^{A}{}_{\rho}e_{D}{}^{\rho}L^{D}{}_{\nu}\ ,
\end{split}
\ee
and if we define $\gamma^{\mu}{}_{\nu}\equiv e_{A}{}^{\mu}L^{A}{}_{\nu}\in \mathbb{R}$ we get $g^{\mu\rho}f_{\rho\nu}=\gamma^{\mu}{}_{\rho}\gamma^{\rho}{}_{\nu}$. Moreover
\be
\begin{split}
f_{\mu\rho}\gamma^{\rho}{}_{\nu}&= \eta_{AB}L^{A}{}_{\mu}L^{B}{}_{\rho}e_{C}{}^{\rho}L^{C}{}_{\nu}\\
&=L^{A}{}_{\mu}e_{C}{}^{\rho}L_{A\rho}L^{C}{}_{\nu}\\
&=L^{A}{}_{\mu}e_{A}{}^{\rho}L_{C\rho}L^{C}{}_{\nu}\\
&=\eta_{BC}L^{B}{}_{\nu}L^{C}{}_{\rho}e_{A}{}^{\rho}L^{A}{}_{\mu}=f_{\nu\rho}\gamma^{\rho}{}_{\mu}\ ,
\end{split}
\ee
which shows that the matrix $f\gamma$ is symmetric. Notice that this is equivalent to $\gamma f^{-1}$ symmetric.  Conversely, suppose we have a real matrix $\gamma$ such that $\gamma^{2}=g^{-1}f$ and $f\gamma$ symmetric. We start by choosing an arbitrary vielbein $L^{A}$ for the metric $f_{\mu\nu}$ i.e. $f_{\mu\nu}=\eta_{AB}L^{A}{}_{\mu}L^{B}{}_{\nu}$, and we denote by $\ell_{B}$ its dual vector i.e. $f^{\mu\nu}=\eta^{AB}\ell_{A}{}^{\mu}\ell_{B}{}^{\nu}$. We then define $e_{A}{}^{\mu}\equiv\gamma^{\mu}{}_{\nu}\ell_{A}{}^{\nu}$. This implies that
\be
\begin{split}
\eta^{AB}e_{A}{}^{\mu}e_{B}{}^{\nu}&=\eta^{AB}\gamma^{\mu}{}_{\rho}\ell_{A}{}^{\rho}\gamma^{\nu}{}_{\sigma}\ell_{B}{}^{\sigma}\\
&=f^{\rho\sigma}\gamma^{\mu}{}_{\rho}\gamma^{\nu}{}_{\sigma}\\
&=(\gamma f^{-1}\gamma^{t})^{\mu\nu}\ .
\end{split}
\ee
But the symmetry of $\gamma f^{-1}$ implies that $(\gamma f^{-1})^{t}=f^{-1}\gamma^{t}=\gamma f^{-1}$ so 
\be
\eta^{AB}e_{A}{}^{\mu}e_{B}{}^{\nu}=(\gamma^{2}f^{-1})^{\mu\nu}=g^{\mu\nu}\ ,
\ee
and $e_{A}$ is a well-defined vielbein for the metric $g_{\mu\nu}$. Notice that this definition tells us $\gamma^{\mu}{}_{\nu}=e_{A}{}^{\mu}L^{A}{}_{\nu}$. It remains to be shown that these vielbeins have the required symmetry property. We start from the symmetry of $f\gamma$
\be
f_{\mu\rho}\gamma^{\rho}{}_{\nu}=f_{\nu\rho}\gamma^{\rho}{}_{\mu}\ ,
\ee
which we can rewrite 
\be
\eta_{AB}L^{A}{}_{\mu}L^{B}{}_{\rho}e_{C}{}^{\rho}L^{C}{}_{\nu}=\eta_{AB}L^{A}{}_{\nu}L^{B}{}_{\rho}e_{C}{}^{\rho}L^{C}{}_{\mu}\ .
\ee
Multiplying by $\ell_{D}{}^{\mu}\ell_{E}{}^{\nu}$ we get $e_{E}{}^{\rho}L_{D\rho}=e_{D}{}^{\rho}L_{E\rho}$ and this completes the proof.
\end{proof}

As we just showed the hypotheses  (i) of Propositions \ref{prop1} and \ref{prop2} are that a certain real (invertible) matrix has a real square root.  It is however well known that not all real invertible matrices have real square roots (see e.g. \cite{Higham,Gallier}) and we will later recall what are the necessary and sufficient conditions for this to occur. In our case, though, the matrix which should have a square root is not totally arbitrary. For example, in Proposition \ref{prop1} it must be of the form $\eta \left(M^{t}\right) \eta M $. This alone does however not ensure the existence of a square root. For example, choosing 
\be
M=\left(
\begin{array}{cccc}
 0 & -1 & 0 & 0 \\
 -3 & 0 & 0 & 0 \\
 0 & 0 & 2 & 0 \\
 0 & 0 & 0 & 1 \\
\end{array}
\right),
\ee
we get 
\be
\eta M^{t}\eta M =\left(
\begin{array}{cccc}
 -9 & 0 & 0 & 0 \\
 0 & -1 & 0 & 0 \\
 0 & 0 & 4 & 0 \\
 0 & 0 & 0 & 1\\
\end{array}
\right)
\ee
which doesn't have any real square roots. Indeed, such a $4 \times 4$ diagonal matrix with four distinct eigenvalues has $2^4$ square roots which are given here by $diag\left(\pm 3i, \pm i, \pm 2 ,\pm 1 \right)$. None of them is real. Hence the decomposition (\ref{POLARLOREN}) can at best hold for a restricted set of matrices.  

We thus see that considering the matrix $M$ above as given by the form (\ref{defM}) invalidates the result of Ref. \cite{Hoek:1982za}. Notice that, if one makes now the simple choice (\ref{GAUG1})-(\ref{GAUG2}) (and considering equation (\ref{relM})), our example involves a ``mismatch" between the time directions of the two metrics $f_{\mu \nu}$ and $g_{\mu \nu}$. However, beyond perturbation theory there is no reason to think that these time directions should coincide or even be compatible.  We will come back to this question later. Notice further that perturbatively, if $g = f+h$, with $h$ small, then to the first order in $h$ $g^{-1} f = (\mathbb{1}_{D}-1/2 f^{-1} h)^2=(\mathbb{1}_{D}-1/2 g^{-1} h)^2$, and so the assumptions (i) and (ii) of Proposition \ref{prop2} are always true perturbatively.

\label{section1}

\section{Sufficient conditions}
Here, in order to formulate simple sufficient conditions  allowing to obtain  (\ref{CONS}) and (\ref{POLARLOREN}), we will discuss the precise relation between hypotheses  (i) and (ii) of Propositions \ref{prop1} and \ref{prop2}. We need to recall how square roots of real matrices are obtained.  We first use the following theorem (that we quote here from Ref.\cite{Higham}).

\begin{theo}\label{theo1}
Let $A$ be an invertible real square matrix (of arbitrary dimension). If $A$ has no real negative eigenvalues, then there are precisely $2^{r+c}$ real square roots of $A$ which are polynomial functions of $A$, where $r$ is the number of distinct eigenvalues of $A$ and $c$ is the number of distinct complex conjugate eigenvalue pairs. If $A$ has a real negative eigenvalue, then $A$ has no real square root which is a  polynomial function of $A$.
\end{theo}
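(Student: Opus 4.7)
The plan is to reduce this matrix statement to a scalar branch-counting problem for the square root function on the spectrum of $A$ via the theory of primary matrix functions, and then track when complex branch choices yield a real matrix. First I would invoke the classical description of polynomial functions of a matrix: every $p(A)$ with $p \in \mathbb{C}[z]$ of degree less than that of the minimal polynomial of $A$ is uniquely determined by, and conversely realised by, the Hermite data $\bigl\{p^{(j)}(\lambda_i) : 0 \le j < m_i\bigr\}$ at each distinct eigenvalue $\lambda_i$, where $m_i$ is the size of the largest Jordan block at $\lambda_i$. Under this correspondence, the equation $p(A)^2 = A$ becomes the scalar congruence $p(z)^2 \equiv z \pmod{(z - \lambda_i)^{m_i}}$ at every eigenvalue.

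Next, because $A$ is invertible we have $\lambda_i \neq 0$, and the implicit function theorem gives exactly two holomorphic solutions of $w^2 = z$ near $z = \lambda_i$, namely $w(z) = \epsilon_i \sqrt{\lambda_i} + O(z - \lambda_i)$ with $\epsilon_i \in \{\pm 1\}$; their higher Taylor coefficients are then determined. Selecting a polynomial square root of $A$ in $\mathbb{C}[A]$ therefore amounts to independently picking a sign $\epsilon_i$ at each distinct eigenvalue, producing $2^k$ candidate polynomial square roots with $k$ the number of distinct eigenvalues of $A$.

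The decisive step is imposing reality. Since $A$ is real, its non-real eigenvalues come in complex-conjugate pairs $(\lambda, \bar\lambda)$ with conjugate Jordan structure; moreover $\overline{p(A)} = \bar p(A)$ for $p \in \mathbb{C}[z]$, so $p(A)$ is real iff $\bar p$ and $p$ coincide on the Hermite data of $A$. Translated to the sign choices: at each positive real eigenvalue both branches are real, contributing a factor $2^r$; at each complex-conjugate pair the branch at $\bar\lambda$ is forced to be the complex conjugate of the branch at $\lambda$, leaving one free sign per pair and contributing $2^c$; at a negative real eigenvalue both branches are purely imaginary and conjugation flips the sign, so the reality condition collapses to $\epsilon = -\epsilon$, with no solution. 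Multiplying the independent surviving choices yields $2^{r+c}$ in the first case and rules out any real polynomial square root in the second.

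The main obstacle I would anticipate is the bookkeeping at a complex-conjugate pair: one must verify that the full Hermite data at $\bar\lambda$ generated by the branch $\epsilon\sqrt{\bar\lambda} + \cdots$ really is term-by-term the complex conjugate of the data at $\lambda$ generated by $\epsilon\sqrt{\lambda} + \cdots$, so that Hermite interpolation with conjugate-branch data actually produces a polynomial satisfying $\bar p = p$ on the whole spectrum. This reduces to observing that the Taylor coefficients of $\sqrt{z}$ near $\lambda_i$ are polynomial expressions in $\sqrt{\lambda_i}$ with real rational coefficients, so that simultaneously conjugating $\lambda_i$ and matching the conjugate branch of $\sqrt{\lambda_i}$ correctly conjugates the entire local expansion, closing the count.
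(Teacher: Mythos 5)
The paper does not prove this theorem: it is explicitly quoted from Higham's \emph{Computing real square roots of a real matrix} (Linear Algebra Appl.\ 88--89, 1987), so there is no in-text proof to compare against. Your argument is correct and is, up to presentation, the standard one (and essentially Higham's): reduce the matrix equation $p(A)^{2}=A$ to the scalar congruences $p(z)^{2}\equiv z \pmod{(z-\lambda_i)^{m_i}}$ via the Hermite-interpolation description of primary matrix functions, note that invertibility gives exactly two local holomorphic branches of $\sqrt{z}$ at each distinct eigenvalue, and then impose reality by requiring $\bar p$ and $p$ to agree on the Hermite data. Your handling of the three cases is right: positive real eigenvalues leave both branches available, complex-conjugate pairs are locked together by conjugation (and your closing remark that the Taylor coefficients of the branch are real-rational expressions in $\sqrt{\lambda_i}$ and $\lambda_i$ is precisely the point needed to see that the conjugated data is a genuine branch at $\bar\lambda$), and a negative real eigenvalue forces $p(\lambda_i)=\epsilon_i\sqrt{\lambda_i}$ to be simultaneously real and purely imaginary and nonzero, which is impossible.

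One small caveat worth flagging: as transcribed in the paper, the statement reads ``$r$ is the number of distinct eigenvalues of $A$,'' but Higham's theorem (and your counting) requires $r$ to be the number of distinct \emph{real} eigenvalues; otherwise a matrix with, say, two distinct real eigenvalues and one complex-conjugate pair would be miscounted as $2^{4+1}$ rather than $2^{2+1}$. Your proof implicitly uses the correct version, so nothing in your argument needs to change, but the mismatch with the quoted statement is worth noting.
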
 
Let us first use this theorem to prove that (i) of Proposition \ref{prop1} (respectively Proposition \ref{prop2}) implies (ii) of the same proposition whenever the matrix $\eta M^t \eta M$ (respectively the matrix $g^{-1} f$) has no real negative eigenvalues.  To see this, just consider a real matrix $A$ with no negative eigenvalues, given by the product of two symmetric invertible matrices $S$ and $S'$. By virtue of the above theorem, we know that this matrix has at least one real square root which is a polynomial function of $A$, that we note $F(A)$. One then has
\ba \label{defF}
F(A) = \sum c_k A^k\ , 
\ea 
where the sum runs over a finite number of integers $k$, and $c_k$ are real numbers.
Using the fact that  $A = SS'$, one then has 
\ba
F(A) = S\left(c_0 S^{-1} + \sum_{k \geq 1} c_k \left[S'SS' \cdots SS'\right]_k\right)
\ea 
where the term $\left[S'SS' \cdots SS'\right]_k $contains $k$ factors of $S'$ and $k-1$ factors of $S$, and is a symmetric matrix. 
This means that that the square root $F(A)$ is given by the product of $S$ by a symmetric matrix. It is enough to prove our assertion by choosing $S$ to be given by $\eta$ and $S'$ to be given by $M^t \eta M$ (respectively $S$ given by $f^{-1}$ and $S'$ to be given by $f g^{-1} f$). Hence, using the above result, and Propositions \ref{prop1} and \ref{prop2} we have shown the following two propositions
\begin{prop}
\label{prop4}
A sufficient condition for an arbitrary invertible real matrix $M$ to  be decomposed as $M = \lambda s$, $\lambda$ being the matrix of a Lorentz transformation and $s$ a symmetric matrix, is that the  matrix $\eta \left(M^{t}\right) \eta M $ has no negative eigenvalues.
\end{prop}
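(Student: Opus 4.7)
The plan is to invoke Proposition \ref{prop1}, which reduces the claim to verifying its two hypotheses (i) and (ii) for the matrix $A \equiv \eta M^t \eta M$. Hypothesis (i), the existence of a real square root of $A$, will follow directly from Theorem \ref{theo1} once one notes that $A$ is invertible (since $\eta$ and $M$ both are) and, by assumption, has no negative eigenvalues. The theorem then delivers not just a real square root but one of the form $F(A) = \sum_k c_k A^k$, a polynomial in $A$ with real coefficients. This disposes of (i) at once and produces the concrete candidate square root that I will inspect for (ii).

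The substantive step is to show that this polynomial square root $F(A)$ automatically has the special form demanded by (ii), namely $\eta$ times a symmetric matrix. The key structural observation is that $A$ is the product of two symmetric matrices: set $S \equiv \eta$ and $S' \equiv M^t \eta M$, both symmetric, so that $A = SS'$. I would then substitute into $F(A)$, separate the $k=0$ term as $c_0 \mathbb{1}_D = S \cdot (c_0 S^{-1})$ (note $S^{-1} = \eta$ is symmetric), and for $k \geq 1$ factor one $S$ out on the left, leaving $c_k\, S \cdot (S' S S' \cdots S S')$ where the bracketed product contains $k$ copies of $S'$ alternating with $k-1$ copies of $S$.

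The remaining point is to check that each such alternating bracket is symmetric. This is a palindrome argument: transposition reverses the order of the factors, but the product $S' S S' \cdots S S'$ reads identically from either end and each factor is individually symmetric, so its transpose coincides with itself. Summing over $k$ and pulling the common $S$ to the left then yields $F(A) = S \cdot (\text{symmetric matrix}) = \eta \cdot (\text{symmetric matrix})$, which is exactly the form demanded by (ii). Applying Proposition \ref{prop1} completes the argument.

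I do not anticipate a genuine obstacle: the only care required is the bookkeeping for the $k=0$ term and the palindromic symmetry of the alternating products, neither of which is deep. The true content of the proposition is already packaged into Theorem \ref{theo1} and Proposition \ref{prop1}; the proof is essentially an algebraic manipulation exploiting that $\eta M^t \eta M$ is, by construction, a product of two symmetric matrices with $\eta$ on the left.
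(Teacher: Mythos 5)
Your proposal is correct and follows essentially the same route as the paper: invoke Theorem \ref{theo1} to obtain a polynomial square root $F(A)$ of $A=\eta M^{t}\eta M$, write $A=SS'$ with $S=\eta$ and $S'=M^{t}\eta M$ symmetric, factor $S$ out of $F(A)$ and observe that the remaining palindromic alternating products are symmetric, then close with Proposition \ref{prop1}. No gaps; this is exactly the paper's argument.
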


\begin{prop} \label{prop5}
A sufficient condition for the existence of vielbeins $e_{A}{}^{\mu}$ and $\fv^{B}{}_{\nu}$ corresponding to the metrics $g_{\mu\nu}$ and $f_{\mu\nu}$ respectively (i.e. $\eta^{AB}e_{A}{}^{\mu}e_{B}{}^{\nu}=g^{\mu\nu}$ and $\eta_{AB}\fv^{A}{}_{\mu}\fv^{B}{}_{\nu}=f_{\mu\nu}$) such that $e_{A}{}^{\mu}\fv_{B\mu}=e_{B}{}^{\mu}\fv_{A\mu}$, is that the matrix $g^{-1} f$ has no negative eigenvalues.
\end{prop}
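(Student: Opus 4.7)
The plan is to reduce the statement to Proposition \ref{prop2}, which equates the existence of a symmetric pair of vielbeins with the existence of a real matrix $\gamma$ satisfying $\gamma^{2}=g^{-1}f$ together with $f\gamma$ symmetric. Under the hypothesis that $g^{-1}f$ has no negative eigenvalues, my task therefore reduces to constructing such a $\gamma$.

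First I would apply Theorem \ref{theo1} to $A:=g^{-1}f$. Since $A$ is invertible (both $g$ and $f$ are non-degenerate) and has no real negative eigenvalues by hypothesis, the theorem supplies a real square root $\gamma = F(A) = \sum_{k} c_{k} A^{k}$ that is a polynomial in $A$ with real coefficients. Condition (i) of Proposition \ref{prop2} is then immediate.

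Next I would exploit the factorisation $A = f^{-1}(fg^{-1}f) = SS'$, where $S:=f^{-1}$ and $S':=fg^{-1}f$ are both symmetric (a consequence of the symmetry of the metrics $f$ and $g$). Using the elementary identity $(SS')^{k}=S(S'S)^{k-1}S'$ for $k\geq 1$ and pulling $S$ out on the left, I would rewrite $\gamma = S\,T$ with $T = c_{0}S^{-1}+\sum_{k\geq 1}c_{k}(S'S)^{k-1}S'$. Each summand of $T$ is either $c_{0}f$ or a palindromic product $S'SS'\cdots SS'$ in two symmetric matrices, and is therefore symmetric itself. Hence $f\gamma = T$ is symmetric, establishing condition (ii) of Proposition \ref{prop2}; Proposition \ref{prop2} then yields the required vielbeins.

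The main (and essentially only) conceptual input is Theorem \ref{theo1}, which forces the square root to be a polynomial in $A$; once this is in hand, the rest of the argument is the observation that writing $g^{-1}f$ as a product of two symmetric matrices lets the symmetry of $f\gamma$ be read off directly from the polynomial expression. I do not anticipate any real obstacle beyond the small combinatorial bookkeeping of the identity $(SS')^{k}=S(S'S)^{k-1}S'$ and the verification that the remaining tail is genuinely palindromic, which is routine linear algebra.
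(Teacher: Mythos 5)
Your proof is correct and follows essentially the same route as the paper: invoke Theorem~\ref{theo1} to obtain a polynomial square root $F(A)$ of $A=g^{-1}f$, factor $A$ as the product of the two symmetric matrices $S=f^{-1}$ and $S'=fg^{-1}f$, pull $S$ out on the left to exhibit $F(A)$ as $S$ times a symmetric (palindromic) matrix, and then appeal to Proposition~\ref{prop2}. The only cosmetic difference is that you spell out the identity $(SS')^{k}=S(S'S)^{k-1}S'$ explicitly, whereas the paper simply asserts that the tail terms $S'SS'\cdots SS'$ are symmetric.
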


If $A$ has one (or more) real negative eigenvalue, Theorem \ref{theo1} does not imply that $A$ does not have a real square root, but just that such a square root cannot be a polynomial function of $A$. In order to enunciate the necessary and sufficient conditions for a real matrix to have a real square root, one first needs to introduce the so-called Jordan decomposition of a matrix. It uses Jordan blocks which can be defined as $r \times r$ matrices wich are of the form $J_{(r,z)}$ given by (for $r\geq 2$)
\ba
J_{(r,z)} =  \left(\begin{array}{ccccc}
 z & 1  &0& \cdots & 0 \\
 0 & z & 1 &  \ddots  & \vdots \\
 \vdots & \ddots & \ddots & \ddots & 0 \\
 \vdots &  \ddots & \ddots & z & 1 \\
 0 & \cdots  &  \cdots & 0 & z \\
\end{array}\right)
\ea
where $z$ is a complex number, and one has $J_{(1,z)}=\left(z\right)$ for $r=1$. One can then show that for an arbitrary $n \times n$ matrix $A$, there exists an invertible matrix $P$ (possibly complex), and a matrix $J$ such that 
\ba \label{Jordan1}
PA P^{-1}=J 
\ea
and the matrix J is a so called Jordan matrix of the form 
\ba \label{Jordan2}
J = diag \left(J_{(r_1,z_1)},J_{(r_2,z_2)}, \cdots, J_{(r_k,z_k)}\right),
\ea
where $k$ is an integer and the matrices $J_{(r_j,z_j)}$
 are called the Jordan blocks of $J$. For a given matrix $A$, the number of Jordan blocks, the nature of the distinct Jordan blocks, and the number of times a given Jordan block occurs in the Jordan matrix $J$ are uniquely determined. Moreover, the $z_i$ are the eigenvalues of $A$. 
 One can further show that a given Jordan block $J_{(r,z)}$ with $z \neq 0$, has precisely two upper triangular square roots, $j^{\pm}_{(r,z)}$, which are in addition polynomial functions of $J_{(r,z)}$ \cite{Higham}. These can be used to find {\it all} the square roots (possibly complex) of a given matrix
 using the following theorem.
 \begin{theo} \label{theo3}
 Let $A$ be a $n \times n$ complex matrix which has a Jordan decomposition given by (\ref{Jordan1})-(\ref{Jordan2}), then all the square roots (which may include complex matrices) of $A$ are given by the matrices $P^{-1} U^{-1} diag \left(j^\pm_{(r_1,z_1)},j^\pm_{(r_2,z_2)}, \cdots, j^\pm_{(r_k,z_k)}\right) U P$, where $U$ is an arbitrary matrix which commutes with $J$.
\end{theo}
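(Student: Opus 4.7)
The plan is to reduce the problem to classifying the square roots of the Jordan matrix $J$ itself and then exploit the structure of the centralizer of $J$. The observation $X^{2}=A \iff Y^{2}=J$ (with $Y=PXP^{-1}$) means that I only need to enumerate square roots of $J$; pulling back by $P^{-1}$ at the very end delivers the statement for $A$.

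The easy direction is a direct check. For any choice of signs $\epsilon_{i}\in\{+,-\}$, set $D=\mathrm{diag}\bigl(j^{\epsilon_{1}}_{(r_{1},z_{1})},\ldots,j^{\epsilon_{k}}_{(r_{k},z_{k})}\bigr)$. Since each $j^{\pm}_{(r_{i},z_{i})}$ squares to $J_{(r_{i},z_{i})}$ (the property of the primary square roots of Jordan blocks recalled just before the theorem), we have $D^{2}=J$ block by block. For any $U$ commuting with $J$, the computation
\begin{equation*}
(U^{-1}DU)^{2}=U^{-1}D^{2}U=U^{-1}JU=J
\end{equation*}
then shows that $P^{-1}U^{-1}DUP$ is a square root of $A$.

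For the converse, I would start from an arbitrary $Y$ with $Y^{2}=J$. Since $Y$ commutes with $J=Y^{2}$, it commutes with every polynomial in $J$ and hence preserves each generalized eigenspace of $J$ (the spectral projectors being polynomials in $J$). This reduces the problem to the case where all the $z_{i}$'s agree with a single nonzero value $z$ (nonzero because $A$ is invertible); the assembly across generalized eigenspaces into a single block-diagonal $U$ at the end is then automatic and preserves the property of commuting with $J$. In the single-eigenvalue case with one Jordan block ($k=1$, $m=1$), the classification is already contained in the statement preceding the theorem: the only square roots of $J_{(r,z)}$ with $z\neq 0$ are $j^{\pm}_{(r,z)}$.

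The main obstacle, and where I would spend most of the effort, is the single-eigenvalue case with several blocks of possibly different sizes, because here $Y$ need not be block-diagonal in the given block structure and the centralizer of $J$ is strictly larger than the algebra of polynomials in $J$. The strategy is to compare $Y$ to a well-chosen primary block-diagonal root $D=\mathrm{diag}(j^{\epsilon_{i}}_{(r_{i},z)})$: matching the Jordan canonical form of $Y$ to that of $D$ (noting that $Y^{2}=D^{2}=J$ fixes the sizes $r_{i}$ of the Jordan blocks of $Y$ up to the sign ambiguity encoded in the $\epsilon_{i}$), one obtains an invertible $Q$ with $Y=Q D Q^{-1}$. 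The nontrivial point is that $Q$ can be taken inside the centralizer of $J$: the ambiguity in $Q$ is exactly the centralizer of $D$, and since $J=D^{2}$ is a polynomial in $D$, the centralizer of $D$ is contained in the centralizer of $J$, so modifying $Q$ on the right by suitable elements of the centralizer of $D$ leaves the conjugation relation intact while allowing one to enforce $QJ=JQ$. A careful description of the commutant of a Jordan matrix with a single repeated eigenvalue (as a block Toeplitz-type algebra indexed by pairs of blocks) is what makes this adjustment possible and constitutes the technical heart of the argument.
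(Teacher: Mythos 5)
The paper does not prove Theorem~\ref{theo3}; it quotes it from the literature (\cite{Higham}, see also \cite{Gallier}), so there is no in-paper proof to compare against. Your outline is largely sound: the direct verification, the reduction to a single generalized eigenspace of $J$ using $YJ=JY$ (since $J=Y^{2}$), and the determination of the Jordan structure of $Y$ up to the sign choices $\epsilon_i$ (because for invertible $Y$ each Jordan block $J_{(r,w)}$ squares to a single block $J_{(r,w^{2})}$) are all correct.

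The flaw is in the step you single out as the ``technical heart.'' Having produced an invertible $Q$ with $Y=QDQ^{-1}$, you propose to push $Q$ into the centralizer of $J$ by right-multiplying by elements $R$ of the centralizer of $D$. But precisely because the centralizer of $D$ is contained in that of $J$, every such $R$ already commutes with $J$, so $QR$ commutes with $J$ if and only if $Q$ does; this modification cannot ``enforce'' $QJ=JQ$ if it fails initially, so the argument as written does not close. Fortunately, no adjustment is needed: since $D=\mathrm{diag}(j^{\epsilon_{i}}_{(r_{i},z_{i})})$ satisfies $D^{2}=J$ as an exact matrix identity (not merely up to similarity), one has $QJQ^{-1}=QD^{2}Q^{-1}=(QDQ^{-1})^{2}=Y^{2}=J$, so \emph{any} $Q$ conjugating $D$ to $Y$ automatically commutes with $J$, and $U=Q^{-1}$ finishes the converse. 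Once this is noticed, the detour through a single eigenvalue of $J$, while harmless, is also unnecessary: it suffices to choose the $\epsilon_i$ so that $D$ has the same Jordan form as $Y$ and apply the identity above directly.
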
 
 The Jordan blocks of a matrix also play a crucial role in the following theorem  which gives the necessary and sufficient condition for a real matrix to have a real square root (see e.g. \cite{Gallier}).
 \begin{theo} \label{theo2}
Let $A$ be an invertible real square matrix (of arbitrary dimension). The matrix $A$ has a real square root if and only if for each of its negative eigenvalues $z_i$, the number of identical Jordan block $J_{(r_i,z_i)}$ where this eigenvalue occurs in the Jordan decomposition of the matrix $A$ is even.   
\end{theo}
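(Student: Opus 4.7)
The plan is to exploit Theorem~\ref{theo3}, which parametrises all (possibly complex) square roots of $A$ via its Jordan decomposition, and to single out those which happen to be real. The organising observation is that the complex Jordan decomposition of any real matrix is symmetric under complex conjugation in the following sense: Jordan blocks with eigenvalue $w$ and with eigenvalue $\bar w$ appear with identical sizes and multiplicities. This is seen by conjugating a given Jordan decomposition of a real matrix to obtain another Jordan decomposition of the same matrix, and appealing to uniqueness of the Jordan form up to permutation of blocks.

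The key technical lemma that I would isolate first is that for any $w \neq 0$, the square of a single Jordan block $J_{(r,w)}$ is similar to the single Jordan block $J_{(r, w^2)}$. Writing $J_{(r,w)} = w \mathbb{1}_r + N$ with $N$ the standard nilpotent shift of order $r$, one has $J_{(r,w)}^2 - w^2 \mathbb{1}_r = N(2w \mathbb{1}_r + N)$; since $2w \mathbb{1}_r + N$ is invertible, $(J_{(r,w)}^2 - w^2 \mathbb{1}_r)^k$ has the same rank as $N^k$ for every $k$, which pins down the Jordan structure of $J_{(r,w)}^2$.

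For the necessity direction, suppose $A = R^2$ with $R$ real. The lemma above says that every Jordan block of $A$ at a negative eigenvalue $z < 0$ arises from a Jordan block of $R$ of the same size with eigenvalue in $\{+i\sqrt{|z|},\, -i\sqrt{|z|}\}$. These two values are complex conjugates of each other and neither is real, so the conjugation symmetry of the Jordan decomposition of $R$ forces the Jordan blocks of $R$ at $+i\sqrt{|z|}$ and at $-i\sqrt{|z|}$ to occur in equal numbers for each size $r$. Squaring them, the total number of Jordan blocks $J_{(r,z)}$ of $A$ of any given size is therefore even, as claimed.

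For the sufficiency direction, I would build $R$ block by block on the complex Jordan form of $A$. For each positive eigenvalue $z > 0$, take the polynomial (hence real-entried) square root $j^+_{(r,z)}$. For each complex-conjugate pair of eigenvalues $(z, \bar z)$ with nonzero imaginary part, assign $j^+_{(r,z)}$ to the blocks at $z$ and its complex conjugate $j^+_{(r,\bar z)}$ to the blocks at $\bar z$. For each negative eigenvalue $z < 0$, the assumed even multiplicities let me pair the blocks $J_{(r,z)}$ two by two and attach $j^+_{(r,w)}$ to one partner and $j^+_{(r,\bar w)}$ to the other, with $w = i\sqrt{|z|}$. The resulting complex square root is then invariant under the conjugation symmetry of the Jordan data of $A$, and transporting it through the real Jordan basis produces a real $R$ with $R^2 = A$. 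The main obstacle is precisely this last reality step; I would handle it cleanly by working from the outset with real Jordan blocks for complex-conjugate eigenvalue pairs and for the paired negative-eigenvalue blocks, rather than by trying to verify reality a posteriori on the complex Jordan form.
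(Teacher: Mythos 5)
The paper does not actually prove Theorem~\ref{theo2}; it simply cites it from Ref.~\cite{Gallier}, so there is no in-paper argument to compare against. Your proof is correct and is essentially the standard one found in that reference. The structure is sound: the key lemma that $J_{(r,w)}^2$ is similar to $J_{(r,w^2)}$ for $w\neq 0$ is correctly proved via the rank argument on $(J_{(r,w)}^2-w^2\mathbb{1}_r)^k = N^k(2w\mathbb{1}_r+N)^k$, and the conjugation symmetry of the complex Jordan data of a real matrix gives the parity constraint on the blocks of $A$ at negative eigenvalues exactly as you state. The necessity direction is complete as written, since $A=R^2$ invertible forces all eigenvalues of $R$ to be nonzero, so the lemma applies to every block of $R$, and the only square roots in $\mathbb{C}$ of a negative real $z$ are the non-real conjugate pair $\pm i\sqrt{|z|}$. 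For the sufficiency direction you correctly flag the reality obstruction in the block-by-block construction and propose the right repair: work in the real Jordan form, use the real $2r\times 2r$ companion blocks for non-real conjugate pairs, and for each negative eigenvalue pair the $J_{(r,z)}$ blocks two by two and take as square root the real Jordan block associated to the pair $(i\sqrt{|z|},-i\sqrt{|z|})$, whose square is similar over $\mathbb{R}$ to $J_{(r,z)}\oplus J_{(r,z)}$ by the lemma together with the fact that two real matrices similar over $\mathbb{C}$ are similar over $\mathbb{R}$. That last elementary fact deserves to be stated explicitly if you write this up in full, since it is what makes the ``transport through the real Jordan basis'' legitimate, but it is standard and your outline is otherwise complete.
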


In the following, we will use the above theorems to discuss in detail the cases\footnote{Note however that according to Theorem \ref{theo2} these cases  should be of zero measure with respect to those which are included.} which are not covered by our Propositions \ref{prop4} and \ref{prop5}. Namely, we will ask if it possible for a matrix to fullfill condition (i) (of Propositions \ref{prop1} and \ref{prop2}) without obeying condition (ii) (of the same propositions). We will do it for various space-time dimensions, starting with the two dimensional case, which has less interest as far as gravity is concerned, but where results useful for the other cases can be derived. In this case we will also be able to give an explicit proof of the propositions of section \ref{section1}. 

\label{newsect2}

\section{Two dimensional case}
A certain number of the results derived before can easily be obtained in two dimensions by an explicit calculation. Consider first the decomposition (\ref{POLARLOREN}). We ask if an arbitrary $2 \times 2$ invertible matrix $M$ given by 
\ba
M= \left(
\begin{array}{cc}
 A & B \\
 C & D\\
\end{array}
\right)
\ea
can be written as (beginning here with proper orthochronous Lorentz transformations)
\ba\label{dec22}\left(
\begin{array}{cc}
 A & B \\
 C & D\\
\end{array}
\right) = 
\left(
\begin{array}{cc}
 c & s \\
 s  & c \\
\end{array}
\right)\left(
\begin{array}{cc}
 a & b \\
 b & d\\
\end{array}
\right)
\ea
where $c = \cosh \psi$ and $s= \sinh \psi$ (and $\psi$ a real number). Expanding the  matrix product in the right hand side, we obtain a system of 4 linear equations obeyed by the three coefficients $\{a, b, d \}$ which we can use, eliminating $b$, to get the necessary condition $(A-D) s =  (C-B) c$, which cannot hold for $|C-B| > |A-D|$. This obviously shows that the decomposition (\ref{dec22}) is not always possible\footnote{This conclusion can be extended easily with the same derivation to the case of a Lorentz transformation which is not proper and/or orthochronous.}, as we showed in a more general way in Proposition \ref{prop1}.

In two dimensions, one can also explicitly show that the condition (i) of Proposition \ref{prop1} always implies the condition (ii) of the same proposition. 
Indeed, consider a $2 \times 2$ matrix $m$, that is written as $m = \eta s$, with $s$ symmetric.  
Let us then assume that this matrix has a square root. According to the proof of Proposition \ref{prop4}, we know that if this matrix has no negative eigenvalues,
it has a square root which is a product of $\eta$ times a symmetric matrix.  Let us study the case where it has at least one  negative eigenvalue. In this case, according to Theorem \ref{theo2}, it must be of the form $m= P\ diag\left(-u,-u\right) P^{-1} = -u \mathbb{1}_2$, where $u$ is a positive non zero number \footnote{This means that $m$ has two identical one dimensional Jordan Block $(-u)$.} (note that such a matrix is indeed in the form $\eta s$).
It remains then to study all the square roots of 
\ba
m= \left(
\begin{array}{cc}
 -u & 0 \\
 0 & -u \\
\end{array}
\right)\ .
\ea
The matrix equation $\gamma^2 = m$ is easy to solve explicitly. We obtain that a real square root $\gamma$ is given by any of the matrices  
\ba
\gamma = \left(
\begin{array}{cc}
 \alpha & \beta \\
 - \frac{u + \alpha^2}{\beta} & -\alpha \\
\end{array}
\right) = \left(
\begin{array}{cc}
 -1 & 0 \\
 0 & 1 \\
\end{array}
\right)
\left(
\begin{array}{cc}
 -\alpha & -\beta \\
 - \frac{u + \alpha^2}{\beta} & -\alpha \\
\end{array}
\right) \label{squa2}
\ea
where $\beta$ and $\alpha$ are real numbers and $\beta$ is non zero. Choosing then $\alpha$ and $\beta$ which obey the constraint $u = \beta^2 - \alpha^2$ we find an infinite family of real matrix square roots of $m$  which are written in the form of the product of $\eta$ by a symmetric matrix. 
A  similar straightforward calculation can be made to prove that hypothesis (i) of Proposition \ref{prop2} implies (ii) of the same proposition.  In fact, it is easy to see that for every symmetric matrix 
\be
\left(
\begin{array}{cc}
 a & b \\
 b & c \\
\end{array}
\right)
\ee
with $ac-b^{2}<0$ there exist real $\alpha$, $\beta$ such that
\be
\left(
\begin{array}{cc}
 a & b \\
 b & c \\
\end{array}
\right)
 \left(
\begin{array}{cc}
 \alpha & \beta \\
 - \frac{u + \alpha^2}{\beta} & -\alpha \\
\end{array}
\right)
\ee
is symmetric i.e. such that $a\beta^{2}-2\alpha\beta b +c u+c\alpha^{2}=0$. Indeed, either $c\neq 0$ and the discriminant of the above second order polynomial equation with respect to $\alpha$, $\Delta_{\alpha}=4\beta^{2}(b^{2}-ac)-4c^{2}u$, is positive for large enough $\beta$, or $c=0$ in which case $b$ must be non-zero and $\alpha=\frac{a\beta}{2b}$ is an obvious solution.
This shows that in 2 dimensions, being able to choose zweibeins obeying (\ref{CONS}) is equivalent to the existence of a real square root of $g^{-1} f$. 

\section{Three dimensional case}
The results obtained in the previous section can be extended to the case of a spacetime with 3 dimensions, which has some relevance for physics and 
in particular massive gravity \cite{Bergshoeff:2009hq,Bergshoeff:2009aq,deRham:2011ca}. In three dimensions, the only cases which are not covered 
by Propositions \ref{prop4} and \ref{prop5} are the cases of real invertible matrices $A$ which have the form 
\ba
A= P^{-1} \left(
\begin{array}{ccc}
 -u & 0 & 0\\
 0 & -u & 0 \\
 0 & 0 & v \\
\end{array}
\right) P
\label{D=3}
\ea
where $u$ and $v$ are non zero positive real numbers, and $P$ is an invertible matrix. Notice that because $A$, $u$ and $v$ are real, $P$ may also be assumed to be real. Before going any further, notice that one can find $3 \times 3$ matrices $A$, in the form $A=\eta s$ with $s$ symmetric, having real square roots, but such that none of these square roots is the product of $\eta$ by a symmetric matrix. Indeed consider $A$ to be given by 
\ba
A=\left(
\begin{array}{ccc}
 7 & -4 & 4 \\
 4 & -3 & 2 \\
 -4 & 2 & -3 \\
\end{array}
\right)
=\left(
\begin{array}{ccc}
 -1 & 1 & -2 \\
 0 & 2 & -1 \\
 2 & 0 & 1 \\
\end{array}
\right)
\left(
\begin{array}{ccc}
 -1 & 0 & 0\\
 0 & -1 & 0 \\
 0 & 0 & 3 \\
\end{array}
\right)
\left(
\begin{array}{ccc}
 -1 & 1 & -2 \\
 0 & 2 & -1 \\
 2 & 0 & 1 \\
\end{array}
\right)^{-1}.
\ea
This matrix has the form of a product of $\eta$ with a symmetric matrix, but none of its real square roots, given by 
\ba
\left(
\begin{array}{ccc}
 -1 & 1 & -2 \\
 0 & 2 & -1 \\
 2 & 0 & 1 \\
\end{array}
\right)
\left(
\begin{array}{ccc}
 \alpha & \beta & 0\\
 -\frac{1+ \alpha^2}{\beta} &-\alpha & 0 \\
 0 & 0 & \pm \sqrt{3} \\
\end{array}
\right)
\left(
\begin{array}{ccc}
 -1 & 1 & -2 \\
 0 & 2 & -1 \\
 2 & 0 & 1 \\
\end{array}
\right)^{-1},
\ea
(with $\alpha$ and $\beta$ real numbers, $\beta$ non vanishing) has the same form. However, this example does not apply to the cases of interest here because $\eta A$ does not have the correct signature: instead of being of signature $(-,+,+)$ as e.g. a matrix of the form $s=M^{t}\eta M$, it is negative definite. 

In contrast we are going to show that (i) of Proposition \ref{prop1} (respectively Proposition \ref{prop2}) implies (ii) of the same proposition whenever the matrix $\eta M^t \eta M$ (respectively the matrix $g^{-1} f$) is of the form (\ref{D=3}). In order to do that let us assume (for the same reason as in section \ref{newsect2}) that $A=P^{-1}J P=S S'$ with $S$ and $S'$ two symmetric matrices of $(-,+,+)$ signature. The fact that $S'$ is symmetric implies that $PSP^{t}$ commutes with $J$ and thus it must be of the form
\be
PSP^{t}=
\left(
\begin{array}{ccc}
  S_{2} & 0 \\
  0 & r \\
\end{array}
\right)\ ,
\label{d3sig1}
\ee
with $S_{2}$ a symmetric two by two matrix and $r$ a real number such that $r\det(S_{2})\neq 0$. Since $PSP^{t}$ is of $(-,+,+)$ signature, it is obvious that $S_{2}$ cannot be negative definite. From the fact that $S'$ has $(-,+,+)$ signature we can infer that $JPSP^{t}=(PS)S'(PS)^{t}$ also has the same signature. But
\be
JPSP^{t}=
\left(
\begin{array}{ccc}
 -uS_{2} & 0 \\
 0 & vr \\
\end{array}
\right)\\ ,
\ee
and thus $S_{2}$ cannot be positive definite either. We therefore necessarily conclude that $S_{2}$ must have $(-,+)$ signature and that $r>0$. This means that there exists a two by two invertible matrix $U_{2}$ such that 
\be
U_{2}S_{2}U_{2}^{t}=
\left(
\begin{array}{ccc}
 -1 & 0 \\
 0 & 1 \\
\end{array}
\right)\ .
\label{d3sig2}
\ee
Now let us define 
\be
U=
\left(
\begin{array}{ccc}
 U_{2} & 0 \\
  0 & 1 \\
\end{array}
\right)\ .
\label{d3sig3} 
\ee
This matrix clearly commutes with $J$ and if we further define
\be
\gamma=P^{-1}U^{-1}\left(
\begin{array}{ccc}
 0 & \sqrt{u} & 0 \\
 -\sqrt{u} & 0 & 0 \\
 0 & 0 & \sqrt{v} \\
\end{array}
\right)
UP\ ,
\ee
we can see that $\gamma^{2}=P^{-1}JP=A$ and thus $\gamma$ is a real square root of $A$.  Furthermore it is easy to see using (\ref{d3sig1}), (\ref{d3sig2}) and (\ref{d3sig3}) that 
\be
S^{-1}\gamma = P^{t}U^{t}
\left(
\begin{array}{ccc}
 0 & -\sqrt{u} & 0 \\
 -\sqrt{u} & 0 & 0 \\
 0 & 0 & \sqrt{v} \\
\end{array}
\right)
UP\ ,
\ee
is symmetric. This provides a constructive proof of our statement.

\section{Four dimensional case}
Considering here the case of $4 \times 4$ real matrices, and using Theorems \ref{theo3} and \ref{theo2}, we have that the only real invertible  matrices $A$ that have at least one negative real eigenvalue and also have at least one real square root must have one of the following Jordan forms  
\ba
A = P^{-1} J_k P
\label{D=4}
\ea 
where $J_k$ is one of the Jordan matrices
\ba
J_1 &=&  diag(-u,-u,-v,-v) \\
J_2 &=& diag (-u,-u,v,w) \\
J_3 &=& diag \left(\left(\begin{array}{cc} -u& 0 \\0 &-u \end{array}\right),\pm \left(\begin{array}{cc}  v+ i w & 0 \\0 & v-i w \end{array}\right)\right) \\
J_4 &=& diag \left(\left(\begin{array}{cc} -u& 0 \\0 &-u \end{array}\right),\left(\begin{array}{cc} v & 1 \\0 &v  \end{array}\right)\right) \\
J_5 &=& diag \left(\left(\begin{array}{cc} -u& 1 \\0  &-u \end{array}\right),\left(\begin{array}{cc} -u & 1 \\0 &-u  \end{array}\right)\right)
\ea 
where $u,v$ and $w$ are positive real numbers, $u$ and $w$ are always non zero, and $v$ can only vanish in the case of $J_{3}$. Because $A$ is real, the invertible matrix $P$ may be chosen to be real in the $J_{1}$, $J_{2}$, $J_{4}$ and $J_{5}$ cases. The case of $J_{3}$ is a bit more tricky, but we can also assume $P$ to be real as long as we replace the Jordan matrix $J_{3}$ by its {\it real}\footnote{This is a particular case of a result usually known as the {\it real} Jordan decomposition of a real matrix.} counterpart
\be
J_{3}'= diag \left(\left(\begin{array}{cc} -u& 0 \\0 &-u \end{array}\right),\pm \left(\begin{array}{cc}  v & w \\-w & v \end{array}\right)\right)\ .
\ee

We will show here that results similar to the ones obtained above in the $D=2$ and $D=3$ cases hold for $D=4$ whenever $A$ is of the form (\ref{D=4}) and  $A=P^{-1}J_{k}P=SS'$ with $S$ and $S'$ two symmetric matrices of Lorentzian signature. We will look in turn at the different cases for what concerns $J_{k}$. Consider first the case where the matrix $A=SS'$ is diagonalizable over $\mathbb{R}$. One can show that this is a sufficient (and in fact also necessary) condition to be able to diagonalize (in the sense of forms) in a common basis the matrices $S^{-1}$ and $S'$ corresponding to two symmetric bilinear forms \cite{Uhlig}\footnote{If one of the two bilinear forms had a euclidean signature, then it would have been possible to diagonalize matrices corresponding to both forms in the same basis without any further assumption.}. In this common basis, each of the diagonal matrices corresponding to $S^{-1}$ and $S'$ has only one negative eigenvalue, and hence there is no way that $A=SS'$ can be equal or similar (in the mathematical sense) to $J_1$, which has four negative eigenvalues. This excludes the $J_{1}$ case from the start. 

The discussion of the $J_{2}$ case proceeds along the same lines as in the $D=3$ case. The fact that $S'$ is symmetric implies that $PSP^{t}$ commutes with $J_{2}$ and thus it must be of the form
\be
PSP^{t}=
\left(
\begin{array}{ccc}
  S_{2} & 0 \\
  0 & S'_{2} \\
\end{array}
\right)\ ,
\label{d4sig1}
\ee
with $S_{2}$ and $S'_{2}$ symmetric two by two matrices such that $\det(S_{2})\det(S'_{2})\neq 0$. Notice that $S'_{2}$ must be diagonal whenever $v\neq w$. Since $PSP^{t}$ is of $(-,+,+,+)$ signature, it is obvious that $S_{2}$ and $S'_{2}$ cannot be negative definite. From the fact that $S'$ has $(-,+,+,+)$ signature we can infer that $J_{2}PSP^{t}=(PS)S'(PS)^{t}$ also has the same signature. But
\be
J_{2}PSP^{t}=
\left(
\begin{array}{ccc}
 -uS_{2} & 0 \\
 0 & diag(v,w)S'_{2} \\
\end{array}
\right)\\ ,
\ee
and thus $S_{2}$ cannot be positive definite either. We therefore necessarily get that $S_{2}$ must have $(-,+)$ signature and that $S'_{2}$ must be positive definite. In particular this means that there exist two by two invertible matrices $U_{2}$ and $V_{2}$ such that 
\be
U_{2}S_{2}U_{2}^{t}=
\left(
\begin{array}{ccc}
 -1 & 0 \\
 0 & 1 \\
\end{array}
\right)
\quad\text{and}\quad
V_{2}S'_{2}V_{2}^{t}=\mathbb{1}_{2}\ ,
\label{d4sig2}
\ee
and whenever $v \neq w$, we can further assume that $V_2$ is diagonal (this is because $S'_2$ is then diagonal and positive definite).
Now let us define 
\be
U=
\left(
\begin{array}{ccc}
 U_{2} & 0 \\
  0 & V_{2} \\
\end{array}
\right)\ .
\label{d4sig3}
\ee
This matrix clearly commutes with $J_{2}$ and if we further define
\be
\gamma=P^{-1}U^{-1}\left(
\begin{array}{cccc}
 0 & \sqrt{u} & 0 & 0 \\
 -\sqrt{u} & 0 & 0 & 0 \\
 0 & 0 & \sqrt{v} & 0 \\
 0 & 0 & 0 & \sqrt{w}\\
\end{array}
\right)
UP\ ,
\ee
we can see that $\gamma^{2}=P^{-1}J_{2}P=A$ and thus $\gamma$ is a real square root of $A$. Analogously to what has been done in the previous section, using (\ref{d4sig1}), (\ref{d4sig2}) and (\ref{d4sig3}), it is also easy to see that 
\be
S^{-1}\gamma = P^{t}U^{t}\left(
\begin{array}{cccc}
 0 & -\sqrt{u} & 0 & 0 \\
 -\sqrt{u} & 0 & 0 & 0 \\
 0 & 0 & \sqrt{v} & 0 \\
 0 & 0 & 0 & \sqrt{w}\\
\end{array}
\right) UP\ ,
\ee
is symmetric. This shows, as in the $D=3$ case, that whenever $A=P^{-1}J_{2}P$ and hypothesis (i) of Proposition \ref{prop1} (respectively Proposition \ref{prop2}) is verified, hypothesis (ii) of the same proposition is also verified. 

The three remaining cases ($J_3, J_4$ and $J_5$) actually never occur as long as we assume that $A$ is the product of two symmetric matrices of Lorentzian signature ($A=SS'$), as we now show. In the $J_{3}$ case, it is easier to work with the real Jordan form of $A$ i.e. $J'_{3}$. In order to understand the implications of the symmetry of $S'$ we need to introduce the matrix
\be
\sigma=
\left(
\begin{array}{cccc}
 1 & 0 & 0 & 0 \\
 0 & 1 & 0 & 0 \\
 0 & 0 & 0 & 1 \\
 0 & 0 & 1 & 0 \\
\end{array}
\right)\ .
\ee
Then it is easy to see that, given the particular form of $J'_{3}$, the symmetry of $S'$ implies that $PSP^{t}\sigma$ commutes with $J'_{3}$. Therefore
\be
PSP^{t}\sigma=
\left(
\begin{array}{ccc}
 S_{2} & 0 & 0 \\
  0 & r & r' \\
  0 & -r' & r \\
\end{array}
\right)\quad 
\text{or equivalently}\quad
PSP^{t}=
\left(
\begin{array}{ccc}
 S_{2} & 0 & 0 \\
  0 & r' & r \\
  0 & r & -r' \\
\end{array}
\right)\ ,
\ee
with $S_{2}$ a symmetric two by two matrix and $r$, $r'$ real numbers such that $\det(S_{2})(r^{2}+r'^{2})\neq 0$. Since the signature of $S$ is $(-,+,+,+)$ and $r^{2}+r'^{2}>0$ (which is the opposite of the determinant of the $2 \times 2$ lower block in the right matrix above), $S_{2}$ must be positive definite. But we also know that the signature of $J'_{3} PSP^{t}=(PS)S'(PS)^{t}$ is $(-,+,+,+)$ and since 
\be
J'_{3}PSP^{t}=
\left(
\begin{array}{ccc}
 -uS_{2} & 0 & 0 \\
  0 & * & * \\
  0 & * & * \\
\end{array}
\right)\ ,
\ee
$S_{2}$ cannot be positive definite and we have a contradiction. This proves by reductio ad absurdum that the $J_{3}$ case cannot occur in this context. A similar argument works for the $J_{4}$ case. Indeed the symmetry of $S'$ again implies that $PSP^{t}\sigma$ commutes with $J_{4}$. Therefore
\be
PSP^{t}\sigma=
\left(
\begin{array}{ccc}
 S_{2} & 0 & 0 \\
  0 & r & r' \\
  0 & 0 & r \\
\end{array}
\right)\quad 
\text{or equivalently}\quad
PSP^{t}=
\left(
\begin{array}{ccc}
 S_{2} & 0 & 0 \\
  0 & r' & r \\
  0 & r & 0 \\
\end{array}
\right)\ ,
\ee
with $S_{2}$ a symmetric two by two matrix and $r$, $r'$ real numbers such that $r^{2}\det(S_{2})\neq 0$. Since the signature of $S$ is $(-,+,+,+)$ and $r^{2}>0$, $S_{2}$ must be positive definite. But, with a similar argument as in the above case, we know that $S_{2}$ cannot be positive definite and we again stumble upon a contradiction. Finally the $J_{5}$ case can be handled in the same manner. Introducing
\be
\sigma'=
\left(
\begin{array}{cccc}
 0 & 1 & 0 & 0 \\
 1 & 0 & 0 & 0 \\
 0 & 0 & 0 & 1 \\
 0 & 0 & 1 & 0 \\
\end{array}
\right)\ ,
\ee
we can express the symmetry of $S'$ as the fact that $PSP^{t}\sigma'$ commutes with $J_{5}$. This in turn means that
\be
PSP^{t}\sigma'=
\left(
\begin{array}{cccc}
 a & b & c & d \\
 0 & a & 0 & c \\
 c & d & e & f \\
 0 & c & 0 & e \\
\end{array}
\right)\quad 
\text{or equivalently}\quad
PSP^{t}=
\left(
\begin{array}{cccc}
 b & a & d & c \\
 a & 0 & c & 0 \\
 d & c & f & e \\
 c & 0 & e & 0 \\
\end{array}
\right)\ ,
\ee
with $a$, $b$, $c$, $d$, $e$, $f$ real numbers such that  $ae-c^{2}\neq 0$. But $\det(PSP^{t})=(ae-c^{2})^{2}>0$ which is incompatible with the Lorentzian signature of $PSP^{t}$ and this excludes the last case. 

This lengthy discussion has shown that (i) of Proposition \ref{prop1} (respectively Proposition \ref{prop2}) implies (ii) of the same proposition whenever the matrix $\eta M^t \eta M$ (respectively the matrix $g^{-1} f$) is of the form (\ref{D=4}). 

In this section (as well as the previous two) we have therefore shown that (at least up to dimension $D=4$) hypotheses (ii) of Propositions \ref{prop1} and \ref{prop2} are superfluous. To summarize, we have proven the following two propositions.
\begin{prop} \label{prop6}
An arbitrary invertible matrix $M$ of order 2, 3 or 4 can be decomposed as $M = \lambda s$, $\lambda$ being the matrix of a Lorentz transformation and $s$ a symmetric matrix, if and only if the real matrix $\eta M^{t} \eta M $ has a real square root.  
\end{prop}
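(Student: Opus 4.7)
The plan is to reduce Proposition \ref{prop6} to Proposition \ref{prop1} by showing that in dimensions $D \in \{2,3,4\}$ the existence of a real square root of $\eta M^{t}\eta M$ already forces the existence of one of the special form $\eta \cdot (\text{symmetric})$. The forward implication is immediate from Proposition \ref{prop1}, so the content is in the converse: assuming (i) of Proposition \ref{prop1}, derive (ii).

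For the converse I would set $A := \eta M^{t}\eta M$ and exploit the factorisation $A = S S'$ with $S = \eta$ and $S' = M^{t}\eta M$. Both are symmetric and, since $M$ is invertible, both carry Lorentzian signature. I would then split into two cases on the spectrum of $A$. If $A$ has no negative real eigenvalue, Proposition \ref{prop4} directly yields the decomposition $M = \lambda s$. Otherwise Theorem \ref{theo2} forces each negative eigenvalue of $A$ to contribute identical Jordan blocks in pairs, and combined with the dimension bound $D \le 4$ this restricts the Jordan normal form of $A$ to a short explicit list.

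The remaining work is to dispose of that list case by case, which is exactly what Sections 4, 5 and 6 carried out. In $D=2$ only $J = \mathrm{diag}(-u,-u)$ survives and the section gives an explicit family of $\eta s$-form square roots. In $D=3$ only $\mathrm{diag}(-u,-u,v)$ remains, treated there constructively. In $D=4$ the five a priori candidates $J_{1},\ldots,J_{5}$ must be examined: a signature argument applied to the blocks of $PSP^{t}$ that commute with the real Jordan form excludes $J_{1}, J_{3}, J_{4}, J_{5}$, leaving only $J_{2}=\mathrm{diag}(-u,-u,v,w)$, for which a conjugation $U$ that simultaneously normalises the induced $2\times 2$ symmetric forms on the two invariant subspaces produces an explicit square root of the required form. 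Collecting these, (ii) of Proposition \ref{prop1} follows from (i), which is exactly the content of Proposition \ref{prop6}.

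The main obstacle is the $D=4$ step. Once negative eigenvalues are allowed, it is a priori plausible that any of $J_{1},\ldots,J_{5}$ could arise, and ruling out four of them demands a delicate signature bookkeeping on $PSP^{t}$ rather than a mere spectral count; the surviving $J_{2}$ case must then be handled constructively. The $D=2$ and $D=3$ reductions are analogous but lighter, and together they close the cases not already covered by Proposition \ref{prop4}.
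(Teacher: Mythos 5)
Your proposal matches the paper's argument: the forward direction follows from Proposition \ref{prop1}, and for the converse you set $A=\eta M^t\eta M=SS'$ with $S=\eta$ Lorentzian, dispose of the no-negative-eigenvalue case via Proposition \ref{prop4}, and for the remaining cases use Theorem \ref{theo2} to constrain the Jordan form of $A$ and then run the signature bookkeeping of Sections 4--6 (the only small wrinkle is that the paper excludes $J_1$ by Uhlig's simultaneous-diagonalization result rather than by the commutant-of-$J$ argument used for $J_3,J_4,J_5$, but this is a minor variant of the same signature reasoning). This is the same proof, organized the same way.
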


\begin{prop} \label{prop7}
For space-time dimensions 2, 3 and 4, there exist vielbeins $e_{A}{}^{\mu}$ and $\fv^{B}{}_{\nu}$ corresponding to the metrics $g_{\mu\nu}$ and $f_{\mu\nu}$ respectively (i.e. $\eta^{AB}e_{A}{}^{\mu}e_{B}{}^{\nu}=g^{\mu\nu}$ and $\eta_{AB}\fv^{A}{}_{\mu}\fv^{B}{}_{\nu}=f_{\mu\nu}$) such that $e_{A}{}^{\mu}\fv_{B\mu}=e_{B}{}^{\mu}\fv_{A\mu}$, if and only if there exists  a real matrix $\gamma$ 
such that $\gamma^{\mu}{}_{\rho}\gamma^{\rho}{}_{\nu}=g^{\mu\rho}f_{\rho\nu}$ (i.e. $\gamma^{2}=g^{-1}f$).
\end{prop}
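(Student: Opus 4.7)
The plan is to establish the two directions of Proposition~\ref{prop7} separately. The direct implication — if the symmetric vielbeins exist, then a real square root of $g^{-1}f$ exists — is already fully handled by the direct proof given for Proposition~\ref{prop2}, and I would simply invoke it without repetition. All of the real work sits in the converse, which is where the restriction $D\leq 4$ comes in, since this is precisely the statement that hypothesis (ii) of Proposition~\ref{prop2} is automatic in low dimensions.

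For the converse, I would start from Proposition~\ref{prop5}: if $g^{-1}f$ has no real negative eigenvalues, then suitable vielbeins exist in any dimension. It remains to treat the case where $g^{-1}f$ has at least one real negative eigenvalue but, by hypothesis, still admits a real square root. Theorem~\ref{theo2} then forces each such negative eigenvalue to appear in pairs of identical Jordan blocks. I would enumerate the possible Jordan forms in each dimension: a single case in $D=2$ (namely $g^{-1}f \sim \mathrm{diag}(-u,-u)$), a single case in $D=3$ (namely $\mathrm{diag}(-u,-u,v)$ with $v>0$), and the five candidates $J_1,\dots,J_5$ in $D=4$ already listed in the text.

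For each case the strategy is to exploit the factorization $A = g^{-1}f = SS'$ with $S=g^{-1}$ and $S'=f$ both symmetric of Lorentzian signature. Writing $A = P^{-1}JP$, the symmetry of $S'$ translates into the algebraic requirement that $PSP^{t}$ (up to a small auxiliary involution $\sigma$ or $\sigma'$ that accounts for the ordering of entries inside non-diagonal Jordan blocks) commutes with $J$. This severely constrains $PSP^{t}$ to a block form dictated by the centralizer of $J$; comparing the Lorentzian signatures of $PSP^{t}$ and of $JPSP^{t} = (PS)S'(PS)^{t}$ then either yields a contradiction, ruling out $J_1$, $J_3$, $J_4$ and $J_5$ in $D=4$, or pins the surviving blocks down tightly enough that one can reduce them to a canonical form through some $U$ commuting with $J$, and then explicitly write down a real square root $\gamma$ of $A$ for which $\gamma f^{-1}$ (equivalently $f\gamma$) is symmetric, which by Proposition~\ref{prop2} gives the desired vielbeins.

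The main obstacle I expect is the $D=4$ case, and inside it the four exclusions: the $J_1$ exclusion relies on the classical fact that $SS'$ being diagonalizable over $\mathbb{R}$ is equivalent to the simultaneous diagonalization of the two associated bilinear forms, while the $J_3$, $J_4$ and $J_5$ exclusions each require choosing just the right auxiliary involution so that the commutant of the corresponding Jordan matrix becomes transparent and the signature bookkeeping actually closes into a contradiction. Once these exclusions are in place, the construction of $\gamma$ in the surviving $J_2$ case proceeds along the same block-reduction template as the $D=3$ argument, so the low-dimensional cases will glue together into the full statement of the proposition.
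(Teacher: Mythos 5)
Your plan is correct and follows essentially the same route as the paper: the direct direction is handled by the direct proof of Proposition~\ref{prop2}, the converse reduces via Proposition~\ref{prop5} to the case of negative real eigenvalues, Theorem~\ref{theo2} constrains the admissible Jordan forms, and the case-by-case signature analysis on the commutant of $J$ (after passing to $PSP^{t}$, with the auxiliary involutions $\sigma$, $\sigma'$ for non-diagonal blocks) either derives a contradiction or yields a constructive square root. The only small wrinkle is your choice of factorization $S = g^{-1}$, $S' = f$: the construction then outputs $\gamma$ with $S^{-1}\gamma = g\gamma$ symmetric rather than $f\gamma$ symmetric as hypothesis~(ii) of Proposition~\ref{prop2} demands, whereas the paper takes $S = f^{-1}$, $S' = f g^{-1} f$ to land directly on $f\gamma$. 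This costs nothing, since for any $\gamma$ with $\gamma^{2} = g^{-1}f$ one has $g\gamma$ symmetric $\Leftrightarrow$ $f\gamma$ symmetric (from $\gamma^{t} = g\gamma g^{-1}$ one computes $(f\gamma)^{t} = g\gamma g^{-1} f = g\gamma^{3} = f\gamma$), but you should make that one-line observation explicit when you write the argument out.
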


We expect that these results continue to hold in higher dimensions even though we do not have a dimension independent proof. 

\label{4D}

\section{Time evolution and application to ghost-free massive gravity}

Now that we have discussed the different necessary and sufficient conditions for (\ref{CONS}) to hold, we may ask ourselves if these conditions are preserved through time evolution. It is easy to see that there is no general answer to this question i.e. it depends on the theory. Consider for example the case of a bimetric theory where the two metrics are not coupled to each other (or just very weakly). The action of such a theory in four dimensions is given by
\be
S_{ex}=M_{f}^{2}\int d^{4}x \sqrt{-f}R_{f}+M_{g}^{2}\int d^{4}x \sqrt{-g}(R_{g}-2\Lambda)\ .
\ee 
It is easy to see that in some coordinate patch a solution to the equations of motion of this theory is simply given by
\ba
f_{\mu\nu}dx^{\mu}dx^{\nu}&=&dt^{2}-dx^{2}+dy^{2}+dz^{2}\ ,\\
g_{\mu\nu}dx^{\mu}dx^{\nu}&=&-dt^{2}+e^{\sqrt{\Lambda}t}(dx^{2}+dy^{2}+dz^{2})\ .
\ea
This solution corresponds to Minkowski space-time for $f_{\mu\nu}$ and de Sitter space-time for $g_{\mu\nu}$. In particular, whatever the time coordinate $t$ 
\be
g^{-1}f = diag(-1,-e^{-\sqrt{\Lambda}t},e^{-\sqrt{\Lambda}t},e^{-\sqrt{\Lambda}t})\ .
\ee
At $t=0$ this matrix reduces to $diag(-1,-1,1,1)$ and admits a real square root $\gamma$ such that $f\gamma$ is symmetric. For instance
\be
\gamma=\left(
\begin{array}{cccc}
 0 & 1 & 0 & 0 \\
 -1 & 0 & 0 & 0 \\
 0 & 0 & 1 & 0 \\
 0 & 0 & 0 & 1 \\
\end{array}
\right)
\ee
clearly verifies the above conditions. This means that on the $t=0$ hypersurface, one may choose vierbeins obeying condition (\ref{CONS}). However as soon as $t\neq 0$ this condition ceases to be true as $g^{-1}f$ does not even admit a real square root anymore. Thus in the above theory, condition (\ref{CONS}) is not preserved under time evolution. 

In contrast, let us consider the recently proposed dRGT theory  \cite{deRham:2010kj,deRham:2010ik,deRham:2011rn}. We first note that in the metric formulation of this theory, one assumes the existence of a real square root of $g^{-1}f$ (where $g$ is a dynamical metric and $f$ a non-dynamical one); then, according to proposition \ref{prop7}, this mere assumption is equivalent to assuming the existence of vierbeins verifying condition (\ref{CONS}). On the other hand, in the vielbein formulation of dRGT theory, it has been shown in \cite{us} (see also \cite{Hinterbichler:2012cn}) that, at least for some region of parameter space, condition (\ref{CONS}) is imposed by the equations of motion and is therefore preserved under time evolution. When this is the case, the propositions proven in this work then also imply that the existence of the matrix square root of  $g^{-1}f$ is dynamically imposed.

\section{Conclusions}
In this note, we studied in detail the sufficient and necessary conditions for two vielbeins $\fv^{A}$ and $\gv^{B}$ associated 
with two metrics $f_{\mu \nu}$ and $g_{\mu \nu}$ defined on a given manifold to be chosen so that they obey the symmetry condition (\ref{CONS}) which has been used as a gauge condition in vielbein gravity or massive gravity. We also studied as a byproduct the necessary and sufficient condition for an arbitrary matrix $M$ to be decomposed as in (\ref{POLARLOREN}). We showed that, in contrast to what has sometimes been claimed in the literature, the condition (\ref{CONS}) and the decomposition (\ref{POLARLOREN}) cannot be achieved in general but require some extra assumptions related to the existence and properties of square roots of matrices. These assumptions are gathered in Propositions 1 to 7 of the present work. An example where this result is particularly relevant is dRGT massive gravity. Indeed, this theory has been considered in two different frameworks: the first one uses two metrics $f$ and $g$ in such a way that the mass term involves the symmetric polynomials of $\gamma=\sqrt{g^{-1}f}$ \cite{deRham:2010kj,deRham:2010ik,deRham:2011rn,Hassan:2011tf,Hassan:2011zd}, while the second one relies on two vielbeins $E^{A}$ and $L^{B}$ and the mass term is polynomial in these 1-forms\footnote{More precisely it is given by $\sum_{n}\beta_{n}\epsilon_{A_{1}\dots A_{D}}E^{A_{1}}\wedge \dots\wedge E^{A_{n}}\wedge L^{A_{n+1}}\wedge\dots\wedge L^{A_{D}}\ .$} \cite{Hinterbichler:2012cn}. A consequence of our results is that, in general, these two formulations are not equivalent. They become so only when condition (\ref{CONS}) is satisfied. In a region of parameter space it has been shown in \cite{us} that the above condition holds as a consequence of the equations of motion, and thus the equivalence is true dynamically. In the complementary parameter space region however, this is not true in general and it is even possible that the real square-root $\gamma$ does not exist. 

We also showed that, in general, in the 4 dimensional case, it is enough to assume that the matrix $g^{-1}f$ admits a real square root, in order to satisfy a sufficient condition for (\ref{CONS}) to be true. However, for general theories with two metrics, this assumption may be violated dynamically as can be seen explicitly from the example of two decoupled metrics obeying Einstein's equations\footnote{The condition may hold on some initial Cauchy surface and be violated later on.}.


\section*{Acknowledgments}
We thank C.~de Rham, G.~Esposito-Farese, G.~Gababdadze, F.~Hassan, K.~Hinterbichler, S.~Mukhanov, M.~Reuter, R.~Rosen, M.~Sasaki, A.~Tolley, M.~Volkov, R.~Woodard and especially D.~Steer for discussions.

\end{document}